\newtheorem{theorem}{Theorem}
\newenvironment{proof}[1][Proof]{\noindent\textbf{#1.} }{\ \rule{0.5em}{0.5em}}
\begin{document}
\title{Quantum state transfer through a qubit network with energy shifts and fluctuations}
\author{Andrea Casaccino}
\affiliation{Research Laboratory of Electronics, Massachusetts Institute of
Technology, Cambridge, MA,02139, USA\\ Information Engineering Department,
University of Siena, I-53100 Siena, Italy}
\author{Seth Lloyd}
\affiliation{Research Laboratory of Electronics, Massachusetts Institute of
Technology, Cambridge, MA, 02139, USA}
\author{Stefano Mancini}
\affiliation{Department of Physics, University of Camerino, I-62032 Camerino, Italy}
\author{Simone Severini}
\affiliation{Institute for Quantum Computing and Department of Combinatorics \&
Optimization, University of Waterloo, Waterloo N2L 3G1, ON Canada}

\begin{abstract}
We study quantum state transfer through a qubit network modeled by spins with
XY interaction, when relying on a single excitation. We show that it is possible to
achieve perfect transfer by shifting (adding) energy to specific vertices. This
technique appears to be a potentially powerful tool to change, and in some
cases improve, transfer capabilities of quantum networks. Analytical results are
presented for all-to-all networks and all-to-all networks with a missing link.
Moreover, we evaluate the effect of random fluctuations on the transmission
fidelity.

\end{abstract}
\maketitle

\section{Introduction}

The rapid growth of the area of quantum information has led to consider the idea
of multi users quantum networks with the final goal of realizing a number of
nano-scale devices and communication protocols \cite{Kimble}. The study of
networks of interacting qubits (spins) constitutes a good testing ground for this
purpose. In the last few years, this kind of networks have been specifically
considered to be good candidates for engineering perfect quantum channels and
allowing information transfer between distant locations \cite{sou, ch, sub} (see
also \cite{sou1}, for a review). Such networks appear to be useful for the
implementation of data buses in quantum mechanical devices, in particular
because they undergo a  free dynamics after an initial set-up.

In this perspective, the possibility of having \emph{perfect state transfer} (for
short, \emph{PST}) comes from suitable quantum interference effects in the
network dynamics. However, one of the problems arising in such a scenario is
given by natural dispersion effects and destructive interference, which determine
a loss of information between communicating sites. In the worst cases,
information can even remain totally localized, due to Anderson localization
effects \cite{And}. While this situation may still be useful, this is not the case
when designing protocols for distant communication.

In a number of recent papers, PST has been related to the combinatorial
properties of  networks (see, \emph{e.g.}, \cite{be}, and the references contained
therein). In particular, in the $XY$ model (respectively, the $XYZ$ model), when
considering a single excitation, it has been shown that PST essentially depends
on the eigensystem of the adjacency matrix of the graph (respectively, the
Laplacian matrix), because certain invariant eigenspaces of the total Hilbert
space evolve independently.

Here we discuss the problem of how to improve the fidelity of excitation transfer
for a fixed interaction ($XY$ model) and network. In particular, we show that, by a
suitable energy shift corresponding to some vertices in the network, it is possible
to achieve perfect transfer in cases where this does not usually happen. We
conjecture that this is possible in many networks, whenever we add a suitable
amount of energy. Moreover, we evaluate the effect of random fluctuations on the
transmission fidelity. We separately consider noise affecting qubits' frequencies
and qubits' couplings and we show signatures of Anderson localization \cite{And}
as well as of stochastic resonance \cite{Fabio}.

The structure of the paper is as follows. In Section II, we describe the model
considered here. In Section III, we give rigorous results for all-to-all networks and
all-to-all networks with a missing link, therefore extending the cases studied in
\cite{ca}. For these networks, we show that a certain energy shift allows PST.
Indeed, it is well-known that there is no PST for an all-to-all network without
energy shift. For the case of an all-to-all network with a missing link, the energy
shift changes the periodicity of the evolution. In Section IV, we discuss how to
enhance the transfer fidelity for a linear spin chain. It is known that a spin chain
with constant couplings allows PST between its end-vertices only when it has
length two or three. Evidence given by numerics show that PST can be achieved
in chains of any length by an appropriate energy shift independent of the number
of nodes. The drawback is a rapid increase of the transfer time. Furthermore, the
number of geodesics between the input and output vertex seems to play a role in
determining the transfer time. Finally, in Section V, we show how noise affects
the transfer. In particular, we show that disordered couplings are more
deleterious than disordered frequencies when optimal energy shift is used.  In the
absence of such a shift,  the noise may enhance the transmission fidelity.
Conclusions are drawn in Section VI, where we briefly summarize the results and
outline potential applications.

\section{Set-up}

Let $G=(V,E)$ be a simple undirected graph (that is, without loops or parallel
edges), with set of vertices $V(G)$ (such that $|V(G)|=n$) and set of edges
$E(G)$. The \emph{adjacency matrix} of $G$ is denoted by $A(G)$ and defined by
$[A(G)]_{ij}=1$, if $ij\in E(G)$; $[A(G)]_{ij}=0$ if $ij\notin E(G)$. The
adjacency matrix is a useful tool to describe a network of $n$ spin-$1/2$
quantum particles. The particles are usually attached to the vertices of $G$,
while the edges of $G$ represent their allowed couplings. If one considers the
$XY$ interaction model then $\{i,j\}\in E(G)$ means that the particles $i$ and
$j$ interact by the Hamiltonian $[H_{XY}(G)]_{ij}=\left(  X_{i}X_{j}%
+Y_{i}Y_{j}\right)  $. Throughout the paper $X_{i}$ and $Y_{i}$ denote the
usual Pauli operators of the $i$-th particle.

Here we consider unit coupling constant. Thus, the Hamiltonian of the whole
network reads
\begin{equation}
H_{XY}(G)=\frac{1}{2}\sum_{i\neq j=1}^{n}[A(G)]_{ij}\left(  X_{i}X_{j}%
+Y_{i}Y_{j}\right)  \label{Hnet}%
\end{equation}
and it acts on the Hilbert space $\left(  \mathbb{C}^{2}\right)^{\otimes n}%
$. Let us now restrict our attention to the single excitation subspace
$\mathbb{C}^{n}$, \emph{i.e.}, the subspace of dimension $n$ spanned by the
vectors $\{|1\rangle,\ldots,|n\rangle\}$. A vector $|j\rangle$ indicates the presence
of the excitation on the $j$-th site and the absence on all the others. This is
equivalent to the following tensor product of the $Z$- eigenstates
$|\underset{n}{\underbrace{0\ldots010\ldots0}}\rangle$, being $1$ in the $j$-th
position. In the basis $\{|1\rangle,\ldots,|n\rangle\}$, the
Hamiltonian coming from Eq. (\ref{Hnet}) has entries $[H_{XY}(G)]_{ij}%
=2[A(G)]_{ij}$. This will be called the $XY${ \emph{adjacency matrix} of the
graph }$G$.
Hereafter, we shall consider the possibility of adding an amount $\Delta_{E}$
of free energy to desired sites. In this case, the $XY$ Hamiltonian reads
\begin{equation}
\lbrack H_{XY}(G,E_{i})]_{ij}=\left\{
\begin{tabular}
[c]{ll}%
$\Delta_{E}(i),$ & if $i=j;$\\
$2,$ & if $i,j\in E(G);$\\
$0,$ & otherwise,
\end{tabular}
\ \ \ \right.  \label{HGij}%
\end{equation}
We simply write $\Delta_{E}$ instead of $\Delta_{E}(i)$ when $i$ is clear from
the context. Finally, let us recall the definition of the \emph{fidelity} at time $t$
between vertex $i$ and vertex $j$ as $f_{G}(i,j;t):=|\langle i|e^{-\iota
H(G)t}|j\rangle|^{2}$, where $i$ represents the input vertex and $j$ the output
vertex (in short $I/O$).

\section{Fidelity}

In this section, we present rigorous results about the effects of an energy shift
only in the input/output vertices for two specific networks: we consider the case
of the \emph{complete graph}, $K_{n}$, and of the \emph{complete graph with a
missing link}, $K_{n}^{-}$. In these two cases, given the Hamiltonian $H_{XY}$,
we express analytically the fidelity and the transfer time as a function of $n$ and
$\Delta_{E}$.

\subsection{Complete graph}

Every two vertices of the complete graph $K_{n}$ are adjacent. For this graph,
we can prove the next result:

\begin{theorem}
\label{th1}Let $\alpha=\sqrt{4n^{2}-4(n-4)\Delta_{E}+\Delta_{E}^{2}}$ with
$n\geq4$ and $k\in\mathbb{N}$. For an energy shift $\Delta_{E}(i,j)$ on the
vertices $i,j\in I/O$, we have the following observations:

\begin{itemize}
\item $\max_{t}f_{K_{n}}(i,i;t)=\max_{t}f_{K_{n}}(j,j;t)=1$, for $\Delta
_{E}(i,j)=2n$ and $t=2k\pi/\alpha$;

\item $\max_{t}f_{K_{n}^{-}}(k,k;t)=1$, for every $k\notin I/O$ and
$t=4k\pi/\alpha$.

When $i\neq j$,

\item $\max_{t}f_{K_{n}}(i,j;t)=1$, for $\Delta_{E}(i,j)=2n$ and $t=\left(
2\pi+4\pi k\right) /\alpha$;

\item $\max_{t}f_{K_{n}}(i,k;t)=16/\alpha^{2}$, for $\Delta_{E}(i)=2n$,
$k\notin I/O$ and $t=\left(  2\pi+4\pi k\right)/\alpha$;

\item $\max_{t}f_{K_{n}}(k,l;t)=[(\alpha(n-2)-2)]^{2}/4\alpha^{2}(n-2)^{2}$,
for $k,l\notin I/O$ and $t=2k\pi/\alpha$.
\end{itemize}
\end{theorem}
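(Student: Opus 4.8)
The key observation is that the energy shift $\Delta_E$ is applied only at the one or two distinguished vertices ($i$ and $j$ in $I/O$), while the underlying graph ($K_n$ or $K_n^-$) is highly symmetric. This symmetry should let me reduce the $n\times n$ matrix to a small effective block. Concretely, for $K_n$ with a shift at the single vertex $i$ (for the $f(i,i)$ and $f(i,k)$ claims) or at the two vertices $i,j$ (for the $f(i,j)$ claim), I would exploit the fact that the $n-2$ (or $n-1$) "anonymous" vertices are permutation-symmetric. The plan is to pass to symmetry-adapted coordinates: the totally symmetric combination of the anonymous vertices couples to $i$ and $j$, while all the antisymmetric combinations decouple and furnish a large degenerate eigenspace. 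This collapses the dynamics onto an effective $2\times 2$ or $3\times 3$ problem whose eigenvalues are exactly the roots encoded in $\alpha=\sqrt{4n^2-4(n-4)\Delta_E+\Delta_E^2}$.

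Let me carry this out step by step. First I would write $H=2A(K_n)+\Delta_E(P_i+P_j)$, where $P_i$ is the projector onto $\ket{i}$, and recall that for $K_n$ the adjacency matrix is $A=J-\mathbb{I}$ with $J$ the all-ones matrix. The all-ones structure means $A$ has the uniform vector as a top eigenvector and a large $(-1)$-eigenspace; the perturbation $\Delta_E(P_i+P_j)$ only mixes within the span of $\ket{i},\ket{j}$ and the uniform vector. Second, I would set up the invariant subspace $\mathcal{S}=\mathrm{span}\{\ket{i},\ket{j},\ket{w}\}$, where $\ket{w}=\tfrac{1}{\sqrt{n-2}}\sum_{k\notin I/O}\ket{k}$ is the normalized symmetric state of the remaining vertices, and compute $H$ restricted to $\mathcal{S}$ as an explicit small matrix. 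Third, I would diagonalize this small matrix and verify that its eigenvalue spacings produce $\alpha$ and that, at $\Delta_E=2n$, the gaps become commensurate so that the time-evolution returns $\ket{i}$ exactly to $\ket{i}$ (giving fidelity $1$) and maps $\ket{i}\mapsto\ket{j}$ at the stated times. Fourth, for the remaining fidelities $f(i,k)$ and $f(k,l)$ with $k,l\notin I/O$, I would expand $\ket{k}$ in terms of $\ket{w}$ plus a component in the degenerate decoupled eigenspace, track each piece's phase, and take the modulus squared; the degenerate component contributes a static (time-independent) overlap that accounts for the factors $16/\alpha^2$ and $[(\alpha(n-2)-2)]^2/4\alpha^2(n-2)^2$.

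For the $K_n^-$ statement (the $f_{K_n^-}(k,k)=1$ item with the factor-of-two-different period $t=4k\pi/\alpha$), the missing link between $i$ and $j$ breaks their direct coupling, so I would redo the reduction with $A(K_n^-)=A(K_n)-(\ket{i}\bra{j}+\ket{j}\bra{i})/1$ adjusted accordingly; this changes the effective small matrix by removing the $2$ in the $(i,j)$ entry, which shifts the resonance condition and doubles the recurrence period, consistent with the claim that "the energy shift changes the periodicity."

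**The main obstacle** I anticipate is not the eigenvalue computation but the bookkeeping of the degenerate eigenspace when the input or output vertex is an anonymous vertex $k\notin I/O$. For those fidelities, $\ket{k}$ is not confined to the three-dimensional invariant subspace $\mathcal{S}$: it has a genuine component in the $(n-3)$-dimensional degenerate space that evolves only by an overall phase. Correctly splitting $\ket{k}$ into its symmetric part (weight $1/\sqrt{n-2}$ along $\ket{w}$) and its orthogonal degenerate part, then recombining the two after time evolution to extract the clean closed forms for $f(i,k)$ and $f(k,l)$, is where the $n$-dependence and the precise algebraic factors must be pinned down without error. I would handle this by computing the overlaps $\langle k|e^{-\iota Ht}|l\rangle$ as a sum of a "dynamical" $\mathcal{S}$-contribution and a "static" degenerate contribution, and then substituting $\Delta_E=2n$ and the stated times to simplify.
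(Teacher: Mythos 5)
Your plan is essentially the paper's own proof: the paper likewise diagonalizes $H_{XY}(K_{n})$ exactly --- its eigenvectors $(-1,0,\ldots,0,1)$, the $(n-3)$-fold degenerate $\lambda=-2$ family, and $(1,\omega^{\pm},\ldots,\omega^{\pm},1)$ are precisely your antisymmetric $I/O$ combination, your decoupled degenerate space, and vectors lying in your invariant subspace spanned by $|i\rangle+|j\rangle$ and the uniform state $|w\rangle$ --- and it then reads off the entries of $e^{-\iota Ht}$ from the spectral decomposition and imposes $\Delta_{E}=2n$ to make the eigenvalue gaps commensurate, exactly as you propose. Your symmetry-adapted reduction to a small effective block is merely a tidier route to the same eigensystem (one can check it reproduces $\lambda_{3,4}^{\pm}=(2(n-2)+\Delta_{E}\pm\alpha)/2$), so the two arguments coincide in substance.
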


\begin{proof}
{The }$XY${ adjacency matrix of $K_{n}$ has the form
\[
\lbrack H_{XY}({K_{n}})]_{ij}=\left\{
\begin{tabular}
[c]{ll}%
$\Delta_{E},$ & if $i=j\in I/O;$\\
$0,$ & if $i=j\not \in I/O;$\\
$2,$ & otherwise.
\end{tabular}
\ \ \right.
\]
}

{The characteristic polynomial $P(\lambda)$ can be obtained as a function of $n$
and $\Delta_{E}$:
\begin{align*}
P(\lambda)  &  =(\lambda+2)^{n-3}(\Delta_{E}-2-\lambda)\times\left(
4(n-1)\right. \\
&  \left.  -2(n-3)\Delta_{E}+2(n-2)\lambda+\Delta_{E}\lambda-\lambda
^{2}\right)  .
\end{align*}
The roots of $P(\lambda)$ are as follows: }$\lambda_{1}=\Delta_{E}-2$,
$\lambda_{2}^{n-3}=-2$, $\lambda_{3,4}^{\pm}=(2(n-2)+\Delta_{E}\pm\alpha)/2$.
A corresponding (unnormalized) orthogonal basis of eigenvectors can be written
as%
\begin{align*}
|\lambda_{1}\rangle &  =(-1,0,\ldots,0,1)\\
\lbrack|\lambda_{2}^{1\leq l\leq n-3}\rangle]_{u}  &  =\left\{
\begin{tabular}
[c]{rl}%
$-\frac{1}{l},$ & if $u\in\{2,n-r:1\leqslant r\leqslant l-1\};$\\
$1,$ & if $u=n-l;$\\
$0,$ & otherwise,
\end{tabular}
\ \ \right. \\
|\lambda_{3,4}^{\pm}\rangle &  =(1,\omega^{\pm},\ldots,\omega^{\pm},1),
\end{align*}
where $\omega^{\pm}=\frac{1}{4(n-2)}(2(n-4)-\Delta_{E}\pm\alpha)$. Thus, from
the spectral decomposition of the unitary matrix in the canonical basis,
$U_{t}(K_{n})\equiv e^{-\iota H(K_{n})t}$, we have the following diagonal entries:

\begin{itemize}
\item if $i\in I/O$ then%
\begin{align*}
\lbrack U_{t}(K_{n})]_{ii}  &  =\frac{1}{4\alpha}\left(  \alpha-2n+\Delta
_{E}+8\right)  e^{-\iota\left[  \lambda_{3}\right]  t}\\
&  +\frac{1}{4\alpha}\left(  \alpha-2n-\Delta_{E}+8\right)  e^{-\iota\left[
\lambda_{4}\right]  t}\\
&  +\frac{1}{2}e^{-\iota\left[  \lambda_{1}\right]  t};
\end{align*}

\item if $i\notin I/O$ then
\begin{align*}
\lbrack U_{t}(K_{n})]_{ii}  &  =\frac{1}{n-2}\left(  n-3\right)
e^{-\iota\left[  \lambda_{1}\right]  t}\\
&  +\frac{1}{2n\alpha-4\alpha}\left(  \alpha-2n+\Delta_{E}+8\right)
e^{-\iota\left[  \lambda_{3}\right]  t}\\
&  +\frac{1}{2n\alpha-4\alpha}\left(  \alpha+2n-\Delta_{E}-8\right)
e^{-\iota\left[  \lambda_{4}\right]  t}.
\end{align*}

\end{itemize}

The off-diagonal entries of $U_{t}(K_{n})$ are as follows:

\begin{itemize}
\item if $i\neq j$ and $i,j\in I/O$ then%
\begin{align*}
\lbrack U_{t}(K_{n})]_{ij}  &  =\frac{\Delta_{E}-2(n-4)+\alpha}{4\alpha
}e^{-\iota\left[  \lambda_{3}\right]  t}\\
&  +\frac{2(n-4)-\Delta_{E}+\alpha}{4\alpha}e^{-\iota\left[  \lambda
_{4}\right]  t}\\
&  -\frac{1}{2}e^{-\iota\left[  \lambda_{1}\right]  t};
\end{align*}

\item if $i\neq j$, $i\in I/O$ and $j\notin I/O$ or \emph{viz}, then
$[U_{t}(K_{n})]_{ij}=2(e^{-\iota\left[  \lambda_{3}\right]  t}-e^{-\iota
\left[  \lambda_{4}\right]  t})/\alpha$.

\item if $i\neq j$ and $i,j\notin I/O$ then%
\begin{align*}
\lbrack U_{t}(K_{n})]_{ij}  &  =\frac{\Delta_{E}-2(n-4)+\alpha}{2(n-2)\alpha
}e^{-\iota\left[  \lambda_{4}\right]  t}\\
&  +\frac{2(n-4)-\Delta_{E}+\alpha}{2(n-2)\alpha}e^{-\iota\left[  \lambda
_{3}\right]  t}\\
&  -\frac{1}{(n-2)\alpha}e^{-\iota\left[  \lambda_{2}\right]  t}.
\end{align*}

\end{itemize}

This gives us the tools to evaluate the fidelity for generic situations. For
instance, if we take $i,j\in I/O$, the fidelity $f(i,j;t)=|\langle
j|U_{t}(K_{n})|i\rangle|^{2}$ reads%
\begin{align}
f(i,j;t)  &  =\frac{\Delta_{E}^{2}+3\alpha^{2}-4\Delta_{E}(n-4)+4(n-4)^{2}%
}{8\alpha^{2}}\label{fid}\\
&  +\frac{\left(  8+\Delta_{E}+\alpha-2n\right)  \left(  \alpha+2n-8-\Delta
_{E}\right)  }{8\alpha^{2}}\cos(\alpha t) \nonumber\\
&  -\frac{\Delta_{E}-2(n-4)+\alpha}{4\alpha}\cos\left[  t\frac{2n-\Delta
_{E}+\alpha}{2}\right] \nonumber\\
&  +\frac{\Delta_{E}-2(n-4)-\alpha}{4\alpha}\cos\left[  t\frac{2n-\Delta
_{E}-\alpha}{2}\right]  .\nonumber
\end{align}
Imposing $\Delta_{E}=2n$, PST is achieved for $t=\frac{1}{\alpha}\left(
2\pi+4\pi k\right)  $ with $k\in\mathbb{N}$.
\end{proof}

The main results are visualized in Fig. (\ref{fig1}) where the fidelity $f(i,j;t)$
between any two vertices $i$ and $j$ of a complete graph is plotted as a function
of time $t$. It is well known that, in this case, there is no PST without energy
shift as it is shown by the dashed line.  On the contrary, optimal energy shift
($\Delta_{E}=2n$) allows PST (fidelity equal to one) at times $t=\left( 2\pi+4\pi
k\right)/\alpha $ with $k\in\mathbb{N}$ as shown by the solid line.

\begin{figure}
[h]
\begin{center}
\includegraphics[
height=1.9207in,
width=2.8816in
]{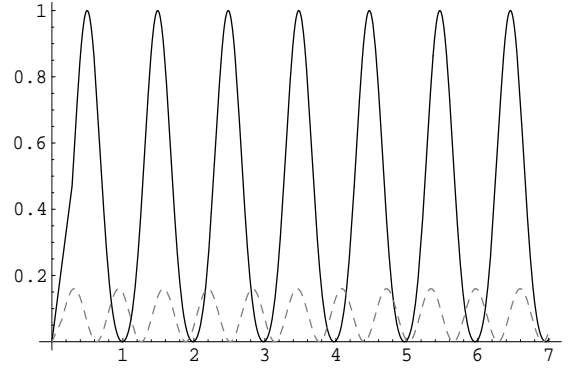}%
\caption{Fidelity $f(i,j;t)$ between any two vertices $i$ and $j$ of a complete graph with $n=5$ as a function of time $t$  in two settings: in the absence of energy shift (dashed line) and in the
presence of optimal energy shift (solid line).
\label{fig1}}
\end{center}
\end{figure}

\subsection{Complete graph with a missing link}

The graph $K_{n}^{-}$ is obtained from $K_{n}$ by deleting an edge, specifically
the one between the input and the output vertex. The next result describes the
behavior of the system in this case:

\begin{theorem}
\label{th2}Let $\beta=\sqrt{4(n^{2}+2n-7)-4(n-3)\Delta_{E}+\Delta_{E}^{2}}$ with
$n\geq4$ and $k\in\mathbb{N}$. For an energy shift  $\Delta_{E}(i,j)$ on the
vertices $i,j\in I/O$, we have the following observations:

\begin{itemize}
\item $\max_{t}f_{K_{n}}(i,i;t)=\max_{t}f_{K_{n}}(j,j;t)=1$, for $\Delta
_{E}(i,j)=2n-6$ and $t=2k\pi/\beta$;

\item $\max_{t}f_{K_{n}^{-}}(k,k;t)=1$, for every $k\notin I/O$ and
$t=4k\pi/\beta$.

When $i\neq j$,

\item $\max_{t}f_{K_{n}}(i,j;t)=1$, for $\Delta_{E}(i,j)=2n-6$ and $t=\left(
2\pi+4\pi k\right)/\alpha$;

\item $\max_{t}f_{K_{n}}(i,k;t)=16/\beta^{2}$, for $\Delta_{E}(i)=2n-6$,
$k\notin I/O$ and $t=\left(  2\pi+4\pi k\right)/\beta$;

\item $\max_{t}f_{K_{n}}(k,l;t)=[(\beta(n-2)-2)]^{2}/4\beta^{2}(n-2)^{2}$, for
$k,l\notin I/O$ and $t=2k\pi/\beta $.
\end{itemize}
\end{theorem}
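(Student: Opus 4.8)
The plan is to mirror the proof of Theorem \ref{th1} exactly, since $K_n^-$ differs from $K_n$ only by removing the single edge $ij$ between the input and output vertices. This changes the $XY$ adjacency matrix by zeroing out the $(i,j)$ and $(j,i)$ entries, which shifts the characteristic polynomial and hence the spectrum. I expect the new relevant quantity to be $\beta=\sqrt{4(n^2+2n-7)-4(n-3)\Delta_E+\Delta_E^2}$, playing the role $\alpha$ played before. So the first step is to write down
\[
\lbrack H_{XY}(K_n^-)]_{ij}=\begin{cases}
\Delta_E, & i=j\in I/O;\\
0, & i=j\notin I/O \text{ or } \{i,j\}=I/O;\\
2, & \text{otherwise,}
\end{cases}
\]
and compute its characteristic polynomial $P(\lambda)$ as a function of $n$ and $\Delta_E$.

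\textbf{Computing the spectrum.} The second step is to factor $P(\lambda)$. By the symmetry of $K_n^-$ under permuting the $n-2$ non-$I/O$ vertices among themselves and swapping $i\leftrightarrow j$, I expect the eigenspace structure to decompose much as for $K_n$: a large degenerate eigenvalue (here again $\lambda=-2$ with multiplicity around $n-3$) coming from vectors supported antisymmetrically on the non-$I/O$ vertices, an eigenvalue tied to the antisymmetric $i,j$ combination $|\lambda_1\rangle=(-1,0,\ldots,0,1)$, and a $3\times3$ block on the symmetric subspace spanned by $\{|i\rangle+|j\rangle,\ \sum_{k\notin I/O}|k\rangle\}$ and the relevant residual direction, whose two nontrivial roots $\lambda_{3,4}^\pm$ will involve $\beta$. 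Removing the $i$--$j$ edge alters $\lambda_1$ (it should now simply be $\Delta_E$ rather than $\Delta_E-2$, since $i$ and $j$ are no longer coupled) and modifies the entries of the $3\times 3$ symmetric block, producing $\beta$ in place of $\alpha$.

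\textbf{Assembling the unitary and the fidelities.} With the eigensystem in hand, the third step is purely mechanical: substitute the eigenprojectors into $U_t(K_n^-)=e^{-\iota H t}=\sum_\mu e^{-\iota\lambda_\mu t}|\lambda_\mu\rangle\langle\lambda_\mu|$ to read off the diagonal and off-diagonal entries $[U_t]_{ii}$, $[U_t]_{ij}$, $[U_t]_{ik}$, $[U_t]_{kl}$, exactly as in the $K_n$ case but with $\alpha\mapsto\beta$ and the shifted roots. Forming $f(i,j;t)=|\langle j|U_t|i\rangle|^2$ then gives a sum of a constant term and three cosines; imposing $\Delta_E=2n-6$ should make the constant term equal $1$ and force the oscillatory terms to vanish at the stated times, yielding PST. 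Each of the five bulleted fidelity maxima follows by evaluating the corresponding $|[U_t]_{\cdot\cdot}|^2$ at its claimed optimal time.

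\textbf{The main obstacle} will be pinning down the correct value $\Delta_E=2n-6$ and verifying that it makes the perfect-transfer condition hold. In the complete-graph case the clean choice $\Delta_E=2n$ conspired to collapse the fidelity expression; here the broken symmetry means the $3\times3$ symmetric block is less tidy, so confirming that $2n-6$ is exactly the value that aligns the three cosine frequencies into commensurate phases (so that all three cosines equal $1$ simultaneously at $t=(2\pi+4\pi k)/\beta$) is the delicate calculation. I would handle it by substituting $\Delta_E=2n-6$ into the eigenvalues $\lambda_{3,4}^\pm$ and checking that their differences are rational multiples of $\beta$, guaranteeing periodic revival; the remaining bullets reduce to bounding the off-diagonal amplitudes $16/\beta^2$ and $[(\beta(n-2)-2)]^2/4\beta^2(n-2)^2$ at the indicated times, which are routine once the spectral data are correct.
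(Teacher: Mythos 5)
Your proposal follows essentially the same route as the paper: diagonalize $H_{XY}(K_{n}^{-})$ using the symmetry under permuting the non-$I/O$ vertices and swapping $i\leftrightarrow j$ (correctly predicting $\lambda_{1}=\Delta_{E}$, the $(n-3)$-fold eigenvalue $-2$, and the pair $\lambda_{3,4}^{\pm}$ involving $\beta$), then form $U_{t}$ by spectral decomposition and read off the fidelities, imposing $\Delta_{E}=2n-6$. Only trivial slips: the symmetric block is $2\times2$ rather than $3\times3$, and at the transfer time $t=2\pi/\beta$ the mechanism is $\cos(\beta t/2)=-1$ with $\cos(\beta t)=1$ (giving $3/8+1/8+1/2=1$) rather than all cosines equaling $1$; neither affects the approach.
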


\begin{proof}
The proof is very similar to the one of Theorem \ref{th1}. {The characteristic
polynomial $P(\lambda)$ of the }$XY$ adjacency matrix of $K_{n}^{-}$ {can be
obtained as function of $n$ and $\Delta_{E}$:
\begin{align*}
P(\lambda)  &  =(\lambda+2)^{n-3}(\Delta_{E}-2-\lambda)\times\left(
4(n-1)\right. \\
&  \left.  -2(n-3)\Delta_{E}+2(n-2)\lambda+\Delta_{E}\lambda-\lambda
^{2}\right)  .
\end{align*}
The roots of $P(\lambda)$ are as follows: }$\lambda_{1}=\Delta_{E}$,
$\lambda_{2}^{n-3}=-2$, $\lambda_{3,4}^{\pm}=(2(n-3)+\Delta_{E}\pm\beta)/2$. A
corresponding (unnormalized) orthogonal basis of eigenvectors can be written
as%
\begin{align*}
|\lambda_{1}\rangle &  =(-1,0,\ldots,0,1)\\
\lbrack|\lambda_{2}^{1\leq l\leq n-3}\rangle]_{u}  &  =\left\{
\begin{tabular}
[c]{rl}%
$-\frac{1}{l},$ & if $u\in\{2,n-r:1\leqslant r\leqslant l-1\};$\\
$1,$ & if $u=n-l;$\\
$0,$ & otherwise,
\end{tabular}
\ \ \ \right. \\
|\lambda_{3,4}^{\pm}\rangle &  =(1,\omega^{\pm},\ldots,\omega^{\pm},1),
\end{align*}
where $\omega^{\pm}=\frac{1}{4(n-2)}(2(n-3)-\Delta_{E}\pm\beta)$. The diagonal
entries of $U_{t}(K_{n})\equiv e^{-\iota H(K_{n})t}$ are given in terms of its
spectral decomposition:

\begin{itemize}
\item if $i\in I/O$ then%
\begin{align*}
\lbrack U_{t}(K_{n})]_{ii}  &  =\frac{1}{4\beta}\left(  \beta-2n+\Delta
_{E}+6\right)  e^{-\iota\left[  \lambda_{3}\right]  t}\\
&  +\frac{1}{4\beta}\left(  \beta-2n-\Delta_{E}+6\right)  e^{-\iota\left[
\lambda_{4}\right]  t}\\
&  +\frac{1}{2}e^{-\iota\left[  \lambda_{1}\right]  t};
\end{align*}

\item if $i\notin I/O$ then
\begin{align*}
\lbrack U_{t}(K_{n})]_{ii}  &  =\frac{1}{n-2}\left(  n-3\right)
e^{-\iota\left[  \lambda_{1}\right]  t}\\
&  +\frac{1}{2n\beta-4\beta}\left(  \beta-2n+\Delta_{E}+6\right)
e^{-\iota\left[  \lambda_{3}\right]  t}\\
&  +\frac{1}{2n\beta-4\beta}\left(  \beta+2n-\Delta_{E}-6\right)
e^{-\iota\left[  \lambda_{4}\right]  t}.
\end{align*}

\end{itemize}

The off-diagonal entries of $U_{t}(K_{n})$ are as follows:

\begin{itemize}
\item if $i\neq j$ and $i,j\in I/O$ then%
\begin{align*}
\lbrack U_{t}(K_{n})]_{ij}  &  =\frac{\Delta_{E}-2(n-3)+\beta}{4\beta
}e^{-\iota\left[  \lambda_{3}\right]  t}\\
&  +\frac{2(n-3)-\Delta_{E}+\alpha}{4\beta}e^{-\iota\left[  \lambda
_{4}\right]  t}\\
&  -\frac{1}{2}e^{-\iota\left[  \lambda_{1}\right]  t};
\end{align*}

\item if $i\neq j$, $i\in I/O$ and $j\notin I/O$ or \emph{viz}., then
$[U_{t}(K_{n})]_{ij}=2(e^{-\iota\left[  \lambda_{3}\right]  t}-e^{-\iota
\left[  \lambda_{4}\right]  t})/\beta$;

\item if $i\neq j$ and $i,j\notin I/O$ then%
\begin{align*}
\lbrack U_{t}(K_{n})]_{ij}  &  =\frac{\Delta_{E}-2(n-3)+\beta}{2(n-2)\beta
}e^{-\iota\left[  \lambda_{4}\right]  t}\\
&  +\frac{2(n-3)-\Delta_{E}+\alpha}{2(n-2)\beta}e^{-\iota\left[  \lambda
_{3}\right]  t}\\
&  -\frac{1}{(n-2)\beta}e^{-\iota\left[  \lambda_{2}\right]  t}.
\end{align*}

\end{itemize}

If we take $i,j\in I/O$, the fidelity $f(i,j;t)=|\langle j|U_{t}(K_{n}%
)|i\rangle|^{2}$ reads%
\begin{align}\label{fidml}
f(i,j;t)  &  =\frac{\Delta_{E}^{2}+3\beta^{2}-4\Delta_{E}(n-3)+4(n-3)^{2}%
}{8\beta^{2}}\\
&  +\frac{(6+\Delta_{E}+\beta-2n)(\beta+2n-6-\Delta_{E})}{8\beta^{2}}%
\cos(\beta t) \nonumber\\
&  -\frac{6-2n+\Delta_{E}+\beta}{4\beta}\cos\left[  t\frac{2n-6-\Delta
_{E}+\beta}{2}\right] \nonumber\\
&  +\frac{6-2n+\Delta_{E}-\beta}{4\beta}\cos\left[  t\frac{2n-6-\Delta
_{E}-\beta}{2}\right] \nonumber
\end{align}
If we assume that $\Delta_{E}=2n-6$ then PST is achieved for $t=\frac{1}%
{\beta}\left(  2\pi+4\pi k\right)  $ with $k\in\mathbb{N}$.
\end{proof}

\bigskip

Notice that for $K_{n}^{-}$ we have $\Delta_{E}=2n-6$, while for $K_{n}$ we have
$\Delta_{E}=2n$. This fact alone does not provide enough information to
conjecture that the energy shift required for PST in a graph with $m$ edges is
proportional to $m$. Indeed, the energy shift appears to be a nonlinear function of
the eigensystem of the matrix $H_{XY}(G)$. Also, notice that the matrices
$H_{XY}(K_{n})$ and $H_{XY}(K_{n}^{-})$ satisfy the relation $(H_{XY}%
(K_{n})\cdot H_{XY}(K_{n}^{-}))^{T}=H_{XY}(K_{n}^{-})\cdot H_{XY}(K_{n})$.

\begin{figure}
[h]
\begin{center}
\includegraphics[
height=1.9207in,
width=2.8816in
]{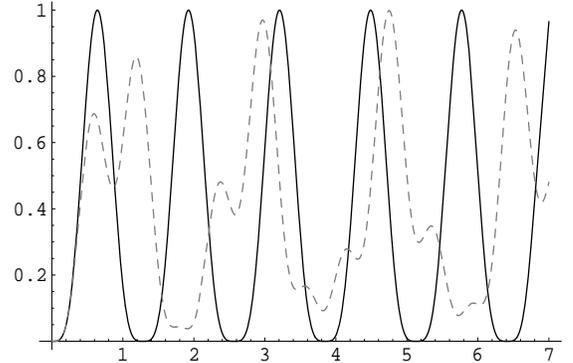}
\caption{Fidelity $f(i,j;t)$ between the two nonadjacent vertices $i$ and $j$ of  a complete graph with a missing link with $n=5$ as a function of time $t$ in two settings: in the absence of energy shift (dashed line) and in the
presence of optimal energy shift (solid line).
\label{fig2}}
\end{center}
\end{figure}

The main results are visualized in Fig. (\ref{fig2}). Here, the fidelity $f(i,j;t)$
between the two nonadjacent vertices $i$ and $j$ of $K_{5}^{-}$ is plotted as a
function of time $t$. In this case, it is known \cite{ca} that for $n$ multiple of four
there is PST in the isotropic Heisenberg model without energy shift. In the $XY$
model considered here, numerical solutions of the suitable \emph{t} in Eq.
(\ref{fidml}) suggest that \emph{PST} can be reached also in other cases as
shown by the dashed line in Fig. (\ref{fig2}).  In this case the fidelity is one for
$t\approx 5$ and $n=5$. This is a remarkable generalization that we conjecture
true for every $n$. An intuition of this fact can be derived substituting the value of
$\Delta_{E}=0$ in the Eq. (\ref{fidml}) that turns out to be independent of the
number of nodes $n$ when the fidelity reaches its maximum. However, the use
of optimal energy shift ($\Delta_{E}=2n-6$) allows PST at shorter times
$t=\frac{1}{\beta}\left( 2\pi+4\pi k\right) $ with $k\in\mathbb{N}$ as shown by the
solid line, thus facilitating the information transfer.

\section{Spin chains}

A special case of a network is represented by a linear spin chain, where the
vertices at the extremities are considered as input and output. By adding an
appropriate free energy shift $\Delta_{E}$ (independent of the number of nodes
$n$) to these vertices, numerical results involving relatively large chains point out
that we can always achieve PST. This is remarkable because usual spin chains
with more than three vertices do not allow PST. As a counterpart of this fact, the
transfer time generally grows rapidly with the number of vertices and with the
amount of energy $\Delta_{E}$. However, in some special cases like the three
vertices chain, where PST is achievable without adding energy, the energy shift
only causes a larger transfer time. Table \ref{tbl} shows accordingly some
numerical results for chains of small length and energy shift on the end-vertices.
Apart from $n=2$, for the sake of clarity, we take the closest integer to the real
values obtained.

\begin{table}
  \centering
\begin{tabular}
[c]{c|cccc}%
$\Delta_{E}\backslash n$ & $2$ & $3$ & $4$ & $5$\\\hline
$10$ & $0.7$ & $5$ & $19$ & $99$\\
$20$ & $0.7$ & $8$ & $81$ & $8010$\\
$30$ & $0.7$ & $12$ & $178$ & $2665$\\
$40$ & $0.7$ & $16$ & $313$ & $6260$\\
$50$ & $0.7$ & $20$ & $494$ & $12294$%
\end{tabular}
\caption{Numerical results of the transfer time for chains of small length and
varying energy shift on the end-vertices.}\label{tbl}
\end{table}

It is plausible that the transfer time in a generic network decreases as the
number of paths between the input and the output vertex increases. The
minimum transfer time is clearly achieved when the two vertices are adjacent.
Thus, for a fixed amount of energy $\Delta_{E}$, numerical results show that the
transfer time for the maximum fidelity, $t_{ij}(G)$, between vertices $i$ and $j$
of a network $G$ on $n$ vertices, is%
\begin{equation}
t_{ij}(G)\approx\mathcal{O}\left(  \frac{t_{\Delta_{E},k}}{p_{min}%
(i,j)}\right)  . \label{vpk}%
\end{equation}
Here $t_{\Delta_{E},k}$ is the transfer time of the spin chain with $n$ vertices
and $p_{min}(i,j)$ is the number of different geodesics between $i$ and $j$. Table
\ref{Table1a} shows the transfer time required to obtain a fidelity close to one,
when we consider antipodal vertices in graphs of a family constructed as follows:
only two vertices, which are then said to be \emph{antipodal}, have degree $l$;
all other vertices have degree $2$ and belong to paths connecting the antipodal
vertices. Such paths are disjoint and have only the antipodal vertices in common.
The number of vertices in a graph with $l$ paths of length $n$ is $n+\left(
n-2\right)  l$, for $n\geq3$. Table \ref{Table1a} gives evidence that we can
gradually cut the transfer time by increasing the number of paths. Intuitively, an
equivalent result should be also obtained by modifying the couplings in the
original chain.

\begin{table}
  \centering
\begin{tabular}
[c]{c|ccc}%
$l\backslash n$ & $3$ & $4$ & $5$\\\hline
$1$ & $5$ & $19$ & $99$\\
$2$ & $3$ & $11$ & $62$\\
$3$ & $2$ & $9$ & $42$\\
$4$ & $1$ & $6$ & $36$%
\end{tabular}
\caption{Numerical results for the decrease of the transfer time with a fixed
energy shift on the end-vertices and an increasing number of
paths}\label{Table1a}
\end{table}

\section{Fluctuations}

In this section, we analyze the problem of transferring an energy excitation in the
presence of noise. We keep working with $K_{n}$ and $K_{n}^{-}$. In practice,
we consider a gaussian stochastic process $\xi_{ij}$ of zero mean and
$\sigma^{2}$ variance, affecting the energy of the particles (qubits' frequencies)
or the interaction energies (qubits' couplings). Under this assumption, the
Hamiltonian entries become
\[
\lbrack H_{XY}(G,\xi)]_{ij}=\left\{
\begin{tabular}
[c]{ll}%
$\Delta_{E}+\xi_{ii},$ & if $i=j\in I/O;$\\
$2+\xi_{ij},$ & if $ij\in E(G);$\\
$0+\xi_{ij},$ & otherwise.
\end{tabular}
\right.
\]
We then distinguish two cases: noise affecting the vertices and noise affecting
the edges. Formally,

\begin{itemize}
\item[1.] $\xi_{ii}\neq 0$, $for every i \in V(G)$ and $\xi_{ij}=0$, when $i\neq j$;

\item[2.] $\xi_{ij}\neq 0$, $ for every i and j\in E(G)$ and $\xi_{ii}= 0$.
\end{itemize}

We are interested in evaluating the average fidelity as a function of the variance
of the independent gaussian random variables. The chosen energy shift
$\Delta_{E}$  is the optimal one, according to the results of Section III.

\begin{figure}
[h]
\begin{center}
\includegraphics[
height=3.003in,
width=3.5in
]{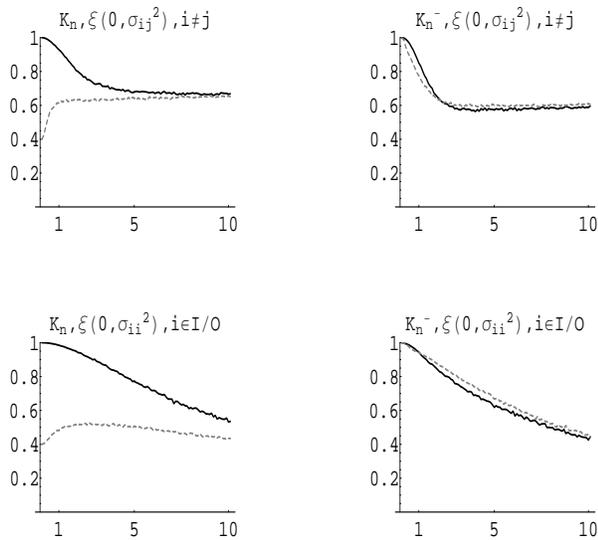}
\caption{On the left, average fidelity between any two vertices of a complete graph as a function of the variance $\sigma^{2}$ (at optimal time).
At the bottom (resp. top) is represented the case 1. (resp. 2.). The solid line refers to the
presence of optimal energy shift at input/output vertices while the dashed line refers to the
absence of such shift.
On the right, average fidelity between the two nonadjacent vertices of a complete graph with a missing link as a function of the variance $\sigma^{2}$ (at optimal time).
At the bottom (resp. top) is represented the case 1. (resp. 2.). The solid line refers to the
presence of optimal energy shift at input/output vertices while the dashed line refers to the
absence of such a shift.
\label{fig3}}
\end{center}
\end{figure}

The results are reported in Fig.\ref{fig3}. By comparing top and bottom graphics
we can see that disordered couplings are more deleterious than disordered
frequencies with an optimal energy shift. This fact  has been already pointed out
in a different context by Gammaitoni \emph{et al.} in \cite{Fazio}. The decay of
fidelity over $\sigma^{2}$ comes from the fact that the noise causes localization
phenomena for the excitation transfer \cite{And}. This is more evident for
$K_{n}^{-}$ where the ``degree of disorder" is higher (compare top-left and
top-right plots). Furthermore, in the absence of energy shift the noise may
enhances the transmission fidelity (top-left and bottom-right plots). This is
reminiscent of stochastic resonance effects \cite{Fabio}.

\section{Conclusions}

We have shown how to enhance the fidelity of excitation transfer in a quantum
spin network with a fixed interaction ($XY$ model) and network. It turns out that
it is possible to achieve perfect transfer with the use of suitable energy shifts in
all-to-all networks and in all-to-all networks with a missing link. We conjecture
that this is possible in \emph{any} network. This technique is promising for future
applications as recent works with super-conducting qubits suggest (\cite{st} and
references therein). Finally, we have shown how different kinds of noise affect the
transfer fidelity. We believe that our results could open up new perspectives for
communication or information processing in quantum networks.

\bigskip

\noindent\emph{Acknowledgments.} The authors would like to thank Stefano
Pirandola,  Masoud Mohseni and Yasser Omar  for useful discussions. The work
of S. M. is supported by the European Commission, under the FET-Open grant
agreement HIP, number FP7-ICT-221889. Research at IQC is supported in part
by DTOARO, ORDCF, CFI, CIFAR, and MITACS.

\end{document}